\newcommand{\sign}{\mathrm{sign}}
\newtheorem{proposition}{Proposition}
\title{Multiplication-Avoiding Variant of Power Iteration with Applications}
\name{Hongyi Pan \quad Diaa Badawi \quad Runxuan Miao\quad Erdem Koyuncu\quad Ahmet Enis Cetin\thanks{This work was supported in part by Army Research Lab
(ARL) under Grant W911NF-21-2-0272, National Science Foundation (NSF) under Grants 1934915 and CCF-1814717, and by an award from the University of Illinois at Chicago Discovery Partners Institute Seed Funding Program.}}
\address{Department of Electrical and Computer Engineering, University of Illinois at Chicago, Chicago, IL}
\begin{document}
%
\maketitle
\small
\begin{abstract}
  Power iteration is a fundamental algorithm in data analysis. It extracts the eigenvector corresponding to the largest eigenvalue of a given matrix. Applications include ranking algorithms, principal component analysis (PCA), among many others. Certain use cases may benefit from alternate, non-linear power methods with low complexity.\ In this paper, we introduce multiplication-avoiding power iteration (MAPI). MAPI replaces the standard $\ell_2$ inner products that appear at the regular power iteration (RPI) with multiplication-free vector products, which are Mercer-type kernels that induce the $\ell_1$ norm. For an $n\times n$ matrix, MAPI requires $n$ multiplications, while RPI needs $n^2$ multiplications per iteration. Therefore, MAPI provides a significant reduction of the number of multiplication operations, which are known to be costly in terms of energy consumption. We provide applications of MAPI to PCA-based image reconstruction as well as to graph-based ranking algorithms. When compared to RPI, MAPI not only typically converges much faster, but also provides superior performance.
\end{abstract}
\begin{keywords}
Power iteration, multiplication-free algorithms, principal component analysis, PageRank algorithm.
\end{keywords}

\section{Introduction}\vspace{-5pt}
\label{secIntro}
Let $\mathbf{A}$ be a diagonalizable matrix and $\mathbf{b}$ be some initial vector. Power iteration is described by the update equation or recurrence relation
\begin{align}
\label{poweriteration}
\mathbf{b} \leftarrow \frac{\mathbf{A} \mathbf{b}}{\|\mathbf{A} \mathbf{b}\|}, 
\end{align}
where $\|\cdot\|$ represents the Euclidean norm. It is well-known that (\ref{poweriteration}) converges to the eigenvector of $\mathbf{A}$ corresponding to the dominant eigenvalue. Note that the normalization can be omitted, i.e., $\mathbf{b} \leftarrow \mathbf{A} \mathbf{b}$ results in the same direction as the dominant eigenvector.

A primary application of the power iteration (\ref{poweriteration}) is conventional (or, $L^2$) principal component analysis (PCA): Suppose that we collect members of a zero-mean $D$-dimensional dataset $\{\mathbf{x}_1,\ldots,\mathbf{x}_N\}\subset\mathbb{R}^D$ to a $D\times N$ matrix $\mathbf{X}=[\mathbf{x}_1 \ \mathbf{x}_2 \ ... \  \mathbf{x}_N]\in \mathbb{R}^{D\times N}$. Consider the corresponding sample covariance matrix
\begin{align}
\label{l2covmatrix}
\mathbf{C} = \frac{1}{N} \mathbf{X} \mathbf{X}^T.
\end{align}
The first principal vector, say $\mathbf{p}_1$, is then the dominant eigenvector of $\mathbf{C}$. The eigenvector $\mathbf{p}_1$ can be extracted via the power iteration in (\ref{poweriteration}) applied with the substitution $\mathbf{A} = \mathbf{C}$. The $i$th principal vector can then be extracted via power iteration on $(\mathbf{I} - \sum_{k=1}^{i-1}\mathbf{p}_k\mathbf{p}_k^T)\mathbf{C}$. 


In the process of computing the PCA of $\mathbf{C}$ through (\ref{poweriteration}), one effectively computes the Euclidean inner product of the candidate principal vector $\mathbf{b}$ with the dataset elements $\mathbf{x}_1,\ldots,\mathbf{x}_N$ at each iteration. The fundamental underlying operation for each such Euclidean inner product is the multiplication of the components of $\mathbf{b}$ with the components of $\mathbf{x}_i$. On the other hand, it is well-known that multiplication operation will overamplify the effects of outliers or noise in the dataset. To increase robustness, our idea is to replace the Euclidean inner products that appear in (\ref{poweriteration}) with Multiplication Avoiding Vector Products (MAVPs), which were originally developed in the context of neural networks  \cite{tuna2009image, afrasiyabi2017energy,afrasiyabi2018non, pan2019additive, ergen2019energy, nasrin2021mf}. Two of the MAVPs satisfy the Mercer-type kernel requirement \cite{pan2021robust,scholkopf:kpca}. 
The resulting multiplication-avoiding power iteration (MAPI) becomes a power iteration in the Reproducing Kernel Hilbert Space (RKHS) defined by the kernel. Our MAVPs utilize the sign information of vector components to preserve the correlative property of multiplication; see \cite{eweda1994comparison} for another work that utilizes the related idea of sign regression.

Energy efficiency and improved computational complexity provides another major motivation for utilizing MAPIs in lieu of ordinary power iteration. In fact, MAVPs rely only on minimum operations and sign changes, and avoid the energy consuming multiplication operation. Therefore, compared to an Euclidean inner product, an MAVP can be executed 
in an energy efficient manner in many processors. The same benefits transfer to MAPI, which utilize MAVPs.

We reemphasize that RPI and MAPI will result in different vectors upon convergence; i.e. they are not equivalent. We expect MAPI to be more robust than RPI as it utilizes MAVPs that induce the $\ell_1$-norm. We also expect the MAPI to be more energy- and computationally-efficient as it avoids the multiplication operation.

Kernel-PCA was introduced by Scholkopf et al. \cite{scholkopf:kpca,scholkopf1998nonlinear}. It has been applied to many signal and image processing problems \cite{goldberg2007kernel,nasrabadi2013hyperspectral,van2012kernel,hoffmann2007kernel}.
Related work includes recursive $\ell_1$-PCA and PCA methods using the similarity measures related with the $\ell_1$-norm \cite{markopoulos2014optimal, markopoulos2017efficient, markopoulos2017indoor, markopoulos2018adaptive,pan2021robust}. However all of the above mentioned methods are computationally costly in large covariance matrices because they either require the eigen-decomposition of the covariance matrix or the solution of a complex optimization problem. 
To the best of our knowledge, this paper is the first paper describing power iterations in the kernel domain. This is probably due to the fact that other kernel PCA methods require costly kernel computations compared to the regular vector dot product \cite{hoffmann2007kernel, xiao2014model, kim2017l1, varon2015noise, battaglino2020generalization}. On the other hand, MAPI kernel operations are more energy efficient than the dot product.
In the context of PCA, MAPI can not only improve robustness but provide significant improvements in computational complexity. MAPI provides similar benefits in other applications in which the power iteration is used. Examples will be provided throughout the paper.

The rest of the paper is organized as follows: In Section \ref{secMethodology}, we formally introduce the MAPI. In Section \ref{secNumerical}, we describe applications of MAPI and report corresponding results. Finally, in Section \ref{secConclusions}, we draw our main conclusions.

\section{Methodology}
\label{secMethodology}
In this section, we provide an overview of the MAVPs, and formally introduce the MAPI method that utilizes MAVPs. 
\subsection{Multiplication-Avoiding Vector Products (MAVPs)}
MAVPs were first studied in \cite{tuna2009image, akbas2014l1} to develop a robust region covariance matrix. They were used in computationally efficient neural networks. In this work, we will utilize the MAVP as
\begin{align}
	\label{odotproduct}
	\mathbf{w}^T \odot \mathbf{x}  & \triangleq \sum_{i=1}^n \sign(w_i x_i) \min(|w_i|, |x_i|).
\end{align}
We define another related dot-product as follows:
\begin{align}
	\label{odotproduct2}
	\mathbf{w}^T\odot_m \mathbf{x} \triangleq \sum_{i=1}^n \mathbf{1}\left(\sign(w_i) = \sign(x_i)\right) \min(|w_i|, |x_i|)
\end{align}
where $\mathbf{1}(\cdot)$ is the indicator function. We call (\ref{odotproduct}) as min1 operation and (\ref{odotproduct2}) as min2 operation. In the following sessions, we will denote the two min-operations as $\oplus = \{\odot, \odot_m\}$. 
Note that unlike an ordinary Euclidean inner product, (\ref{odotproduct}) and (\ref{odotproduct2}) do not contain any multiplication operations. Energy efficiency of $\oplus$ operation varies from processor to processor.
For example, multiplication-based regular dot-product operation consumes about 4 times more energy compared to the multiplication avoiding dot product operations defined in (\ref{odotproduct}) and (\ref{odotproduct2}) in compute-in-memory (CIM) implementation at 1 GHz operating frequency \cite{nasrin2021mf}. The MAVPs induce the $\ell_1$-norm as $\mathbf{x}^T \oplus \mathbf{x} = \sum_{i=1}^n \min(|x_i|, |x_i|) = \|\mathbf{x}\|_1$. 

The MAVPs can be extended to matrix multiplications as follows: Let $\mathbf{W} = [\mathbf{w}_1\cdots\mathbf{w}_m] \in \mathbb{R}^{n\times m}$ and $\mathbf{X} = [\mathbf{x}_1\cdots\mathbf{x}_p] \in \mathbb{R}^{n\times p}$ be arbitrary matrices. We define $\mathbf{W}^T \oplus \mathbf{X}$ as an $n\times p$ matrix whose entry in the $i$th row, $j$th column is $\mathbf{w}_i^T\oplus \mathbf{x}_p$.
Thus, the definition is similar to matrix multiplication $\mathbf{W}^T\mathbf{X}$ by only changing the element-wise product to element-wise MAVP. 



Recall from Section \ref{secIntro} that given a dataset $\mathbf{X}=[\mathbf{x}_1 \ \mathbf{x}_2 \ ... \  \mathbf{x}_N]\in \mathbb{R}^{D\times N}$, the eigendecomposition of $\mathbf{C} = \frac{1}{N} \mathbf{X} \mathbf{X}^T$ yields the ordinary PCA. Alternatively, one can construct the sample covariance matrix through MAVPs. In particular, in \cite{pan2021robust}, we considered the eigendecomposition of the ``min-covariance matrix''
\begin{equation}
\label{l1samplecovariance}
\mathbf{A} = \tfrac{1}{N}\mathbf{X} \oplus \mathbf{X}^T
\end{equation}

We have shown in \cite{pan2021robust} that the resulting ``Min-PCA'' provides better resilience against impulsive noise than regular PCA in image reconstruction experiments.

\subsection{Multiplication-Avoiding Power Iteration (MAPI)}
We are now ready to introduce the MAPI. We replace the standard products in (\ref{poweriteration}) with a MAVP. To further reduce the computational complexity, we replace the normalization by $\ell_2$-norm by a normalization by $\ell_1$-norm. Note that the calculation of the $\ell_2$-norm requires multiplications while $\ell_1$-norm can be calculated via additions only. Our revisions yield the iteration
\begin{equation}
\mathbf{w} \leftarrow \frac{\mathbf{A} \oplus \mathbf{w}}{|| \mathbf{A} \oplus \mathbf{w} ||_1}
\label{kernelpoweriter}
\end{equation}
We have observed that such a change of normalization does not effect the final performance greatly; also see \cite{lin2010power, arikan1994adaptive} for studies that utilize an $\ell_1$ normalization. The final MAPI algorithm is shown in Algorithm~\ref{al: MAPI}. For applications to PCA, we can normalize the final $\mathbf{w}_t$ by its $l_2$-norm for extraction of subsequent principal vectors. 



\begin{algorithm}[htbp]
	\caption{MAPI}
	\label{al: MAPI}
	\begin{algorithmic}[1]
		\renewcommand{\algorithmicrequire}{\textbf{Input:}}
		\renewcommand{\algorithmicensure}{\textbf{Output:}}
		
		\REQUIRE $\mathbf{A} \in \mathbb{R}^{N\times N}$, iteration times $T$.
		
		\STATE Initialize $\mathbf{w}_0$ as a random vector with $||\mathbf{w}_0||=1$;
		\FOR{$t=0, 1, ..., T-1$} 
		\STATE $\mathbf{w}_{t+1} = \mathbf{A}\oplus \mathbf{w}_t$;
		\STATE $\mathbf{w}_{t+1} \leftarrow \mathbf{w}_{t+1}/||\mathbf{w}_{t+1}||_1$;
		\ENDFOR 
		\STATE (optional) $\mathbf{w}_T = \mathbf{w}_T/||\mathbf{w}_T||_2$;
		\RETURN  dominant pseudo-eigenvector $\mathbf{w}_T$.\\
	\end{algorithmic}
\end{algorithm}
Each step of the MAPI defined in (\ref{kernelpoweriter}) corresponds to
a transformation in the RKHS. As a result the convergence of MAPI depends on the matrix ${\bf A}$. Since we normalize the iterations in (8) at each step, the iterates are bounded and they satisfy $||{\bf w}_t ||_1 =1$.
We have observed that the MAPI in Algorithm~\ref{al: MAPI} converges in all experiments that we have tried and the resulting vector can be used in practical applications. 
For certain MF operators, convergence can even be proved analytically. In particular, let 
\begin{align}
\label{oplusoperator}
    \mathbf{w}^T \lozenge \mathbf{x} = \sum_{i=1}^n \left( \mathrm{sign}(x_i) y_i + x_i \mathrm{sign}(y_i) \right)
\end{align}
The operator $\lozenge$ has proved very useful for neural network applications \cite{afrasiyabi2017energy}. The convergence behavior of the MAPI is particularly simple, at least in the case of positive entries. Let $\mathbf{A} = [\mathbf{a}_1^T \cdots \mathbf{a}_n^T]$.
\begin{proposition}
For any initial vector $\mathbf{w}$ and matrix $\mathbf{A}$ with positive entries, the iterations $\mathbf{w} \!\leftarrow\! \frac{\mathbf{A} \lozenge \mathbf{w}}{|| \mathbf{A} \lozenge \mathbf{w} ||_1}$ converges to the vector
\begin{align}
    \frac{[1+\|\mathbf{a}_1\|_1 \cdots 1+\|\mathbf{a}_n\|_1]^T}{\sum_{i=1}^n (1+\|\mathbf{a}_i\|_1)}
\end{align}
in at most two iterations. 
\end{proposition}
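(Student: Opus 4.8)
The plan is to exploit the purely additive form that $\lozenge$ takes once $\mathbf{A}$ has positive entries. First I would expand a single component: writing $\mathbf{a}_i^T$ for the $i$th row of $\mathbf{A}$, the definition of $\lozenge$ gives $(\mathbf{A}\lozenge\mathbf{w})_i = \sum_{j=1}^n\big(\sign(a_{ij})\,w_j + a_{ij}\,\sign(w_j)\big)$, and since every $a_{ij}>0$ forces $\sign(a_{ij})=1$, this collapses to
\[
(\mathbf{A}\lozenge\mathbf{w})_i \;=\; \sum_{j=1}^n w_j \;+\; \sum_{j=1}^n a_{ij}\,\sign(w_j).
\]
The key structural observation, on which the whole argument rests, is that the right-hand side depends on $\mathbf{w}$ only through the scalar $S:=\sum_j w_j$ and the sign pattern $\sign(w_j)$ --- the magnitudes of the entries of $\mathbf{w}$ play no role. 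This is exactly what should force the iteration to stabilize in a fixed, small number of steps.

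Next I would confirm that the claimed vector is a fixed point reached in one step from any positive initialization. If $\mathbf{w}$ is positive and $\ell_1$-normalized, then $\sign(w_j)=1$ for all $j$ and $S=\sum_j w_j=\|\mathbf{w}\|_1=1$, so the displayed identity becomes $(\mathbf{A}\lozenge\mathbf{w})_i = 1+\sum_j a_{ij} = 1+\|\mathbf{a}_i\|_1$, using positivity once more to identify $\sum_j a_{ij}$ with $\|\mathbf{a}_i\|_1$. Dividing by $\|\mathbf{A}\lozenge\mathbf{w}\|_1 = \sum_{i}(1+\|\mathbf{a}_i\|_1)$ reproduces precisely the asserted limit; and because that limit is itself positive and $\ell_1$-normalized, substituting it back leaves it unchanged. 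Hence from a positive start the iteration lands on the stated vector after a single step and stays there.

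The remaining case --- a genuinely arbitrary initial vector --- is where I expect the real difficulty, and it is the step I would treat as the main obstacle. The plan is to show that after one iteration the iterate has a constant sign pattern (all entries positive, or all negative), so that the second iteration reduces to the positive case just settled and produces $\pm$ the claimed vector, giving the stated two-step bound. Establishing the uniform sign is the crux: after the first step the $i$th entry equals $S+\sum_j a_{ij}\,\sign(w_j)$, and for a sign pattern with $S$ close to zero this expression can in principle take different signs in different rows $i$, so a naive argument fails. I would therefore expect the positivity of $\mathbf{A}$ to be used decisively here, together with an accompanying assumption that the initialization is positive (natural in view of the Perron--Frobenius structure of positive matrices); pinning down exactly which hypothesis makes the uniform-sign step go through is the part of the proof I would budget the most care for.
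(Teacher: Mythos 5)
Your calculation for a positive, $\ell_1$-normalized start is exactly the paper's proof: the paper likewise reduces $\mathbf{a}_i^T\lozenge\mathbf{b}$ to $\|\mathbf{a}_i\|_1+\|\mathbf{b}\|_1$, notes that the normalized first iterate $\mathbf{c}$ satisfies $\|\mathbf{c}\|_1=1$, and reads off the claimed limit after the second application. The obstacle you flag for genuinely arbitrary initializations is real, and the paper does not overcome it: the identity $\mathbf{a}_i^T\lozenge\mathbf{b}=\|\mathbf{a}_i\|_1+\|\mathbf{b}\|_1$ on which the paper's first step rests already presumes $b_j\ge 0$ for all $j$; for mixed signs the correct expression is $\sum_j b_j+\sum_j a_{ij}\,\sign(b_j)$, exactly as you write. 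Your suspicion that no uniform-sign argument rescues the general case is also correct, and one can make it concrete: take $\mathbf{A}=\left[\begin{smallmatrix}1&10\\10&1\end{smallmatrix}\right]$ and $\mathbf{w}=(1,-1)^T$. Then $\mathbf{A}\lozenge\mathbf{w}=(-9,9)^T$, which normalizes to $(-\tfrac12,\tfrac12)^T$, and the next step returns $(\tfrac12,-\tfrac12)^T$; the iteration oscillates with period two and never approaches the stated positive limit. So the proposition as literally stated (``any initial vector'') fails, and the additional hypothesis you anticipate --- positivity (or at least a fixed sign) of the initialization, consistent with the Perron--Frobenius setting --- is precisely what the paper's proof silently assumes. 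In short, you have reproduced the paper's argument for the case it actually covers and correctly diagnosed the hypothesis that is missing from both the statement and the published proof.
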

\begin{proof}
The proof is by calculation. After the first iteration, one obtains a vector $\mathbf{c} = \mathbf{A} \lozenge \mathbf{w} = [c_1 \cdots c_N]^T$ whose $i$th component is 
\begin{align}
    c_i \triangleq \frac{\mathbf{a}_1^T \lozenge \mathbf{b}}{\sum_{j=1}^n(\mathbf{a}_j^T \lozenge \mathbf{b})} = \frac{\|\mathbf{a}_i\|_1 + \|\mathbf{b}\|_1}{\sum_{j=1}(\|\mathbf{a}_j\|_1 + \|\mathbf{b}\|_1)}.
\end{align}
Note that $\|\mathbf{c}\|_1 = 1$. Therefore, after the second iteration, we obtain a vector whose $i$th component is
\begin{align}
    \frac{\|\mathbf{a}_i\|_1 + \|\mathbf{c}\|_1}{\sum_{j=1}(\|\mathbf{a}_j\|_1 + \|\mathbf{c}\|_1)}
\end{align}
The result follows since $\|\mathbf{c}\|_1 = 1$.
\end{proof}

The general case for different operators and arbitrary entries for matrices is left as future work.

Note that MAPI only requires $N$ divisions per iteration as opposed to the RPI, which requires $N^2$ multiplications and $N$ divisions. Therefore, we expect MAPI to consume significantly less energy compared to RPI in most processors.

\vspace{-8pt}
\section{Applications and Numerical Results}
\label{secNumerical}
\vspace{-8pt}
\subsection{Image Reconstruction Example}

 We consider image reconstruction example studied in \cite{markopoulos2014optimal}. Our experiment follows the same structure in \cite{markopoulos2014optimal, pan2021robust}.
 We use regular power iteration (RPI) and MAPI to compare the image reconstruction results of the MAPI. In this experiment, the image size $D\times D=128\times128$, and the pixel values are in the range of $[0, 1]$. Suppose that we want to reconstruct a gray-scale image (Fig.~\ref{fig: Image Reconstruction original_3b}) from its $N=10$ occluded versions. As Fig.~\ref{fig: Image Reconstruction occluded_3b} shows, the corrupted images are created by partitioning the original image into sixteen tiles of size $32\times32$ and replacing three arbitrarily selected tiles by $32\times32$ gray-scale noise patches.  The pixel values of the noise patches are independent and identically distributed random variables that are uniform on $[0,1]$.  The reconstruction algorithm is described in Algorithm~\ref{al: image reconstruction}. In particular, in Step 5 and Step 6, we obtain the first two dominant generalized eigenvectors of the min-covariance matrix of images via MAPI. We then reconstruct the image using these two generalized eigenvectors in Line 8.  
 
 \begin{algorithm}[htbp]
	\caption{Image Reconstruction via MAPI}
	\label{al: image reconstruction}
	\begin{algorithmic}[1]
		\renewcommand{\algorithmicrequire}{\textbf{Input:}}
		\renewcommand{\algorithmicensure}{\textbf{Output:}}
		
		\REQUIRE $N$ corrupted versions of an image: $\mathbf{I}_1, \mathbf{I}_2, ..., \mathbf{I}_N \in \mathbb{R}^{D\times D}$, pixels are in range of $[0, 1]$.
		\STATE Reshape $\mathbf{I}_i$ to vector $\mathbf{v}_i\in \mathbb{R}^{D^2\times 1}$ for $i=1,\ldots,N$;
		\STATE $\mathbf{V} = [\mathbf{v}_1\ \mathbf{v}_2\ ...\ \mathbf{v}_N] \in \mathbb{R}^{D^2\times N}$;
		\STATE Reduce mean: $\mathbf{m} = \text{mean}(\mathbf{V}, 2)\! \in\! \mathbb{R}^{D^2\times 1}, \mathbf{V} \!=\! \mathbf{V} - \mathbf{m}$;
		\STATE Construct the min-covariance matrix  $\mathbf{C}=\frac{1}{N-1}\mathbf{V}\oplus\mathbf{V}^T \in \mathbb{R}^{D^2\times D^2}$ of $\mathbf{V}$;
		\STATE Perform MAPI on $\mathbf{C}$ to obtain the dominant eigenvector $\mathbf{w}_1\in \mathbb{R}^{D^2\times 1}$;
		\STATE Perform MAPI on $\mathbf{C}-(\mathbf{w}_1\oplus\mathbf{w}_1^T)\mathbf{C}$ to obtain the second dominant eigenvector $\mathbf{w}_2\in \mathbb{R}^{D^2\times 1}$;
		\STATE $\mathbf{w} = [\mathbf{w}_1\ \mathbf{w}_2] \in \mathbb{R}^{D^2\times 2}$
		\STATE Reconstruct image:  $\mathbf{\hat{v}}=(\mathbf{w}\oplus\mathbf{w}^T)(\mathbf{v}_i-\mathbf{m})+\mathbf{m}$;
		\STATE Reshape $\mathbf{\hat{v}}$ back to the matrix form $\mathbf{\hat{I}}$;
		\RETURN reconstructed image $\mathbf{\hat{I}}$.\\
	\end{algorithmic}
\end{algorithm}

In Table \ref{tab: Image Restoration Results for Images corrupted by Noise Patches}, we compare the peak signal-to-noise ratio (PSNR) performances provided by different algorithms. As Fig.~\ref{fig: Image Reconstruction l2_3b} and~\ref{fig: Image Reconstruction min_3b} show, MAPI provides higher PSNRs than RPI, globally. The average PSNR of min2-PI, which represents MAPI via the min2 operator in (\ref{odotproduct2}), is 1.62 dB higher than RPI. The MAPI based reconstruction (24.8 dB) is also superior to recursive $\ell_1$-PCA (24.2 dB). Thus, based on the experimental PSNR results in Table 1, we can conclude that MAPI is as robust as $\ell_1$-PCA based restoration. We also expect the computational cost MAPI to be much lower than $\ell_1$-PCA methods. In fact, for an $N\times N$ covariance matrix, RPI needs $O(N^2)$ multiplications per iteration while MAPI avoids a quadratic number of multiplications and needs $O(N^2)$ additions only. Note that $\ell_1$-PCA utilizes regular power iteration updates and need $O(N^2)$ multiplications also. We also experimentally validated in Matlab that to compute the first principal vector, the recursive $\ell_1$-PCA and MAPI methods need 4.54 and 3.24 seconds, respectively. 

\begin{figure*}[htbp]
\centering
\subfloat[\label{fig: Image Reconstruction original_3b}]{
\begin{minipage}{0.1\linewidth}
\includegraphics[width=1\linewidth]{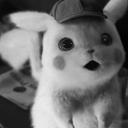}\\
\includegraphics[width=1\linewidth]{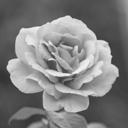}
\end{minipage}}
\subfloat[\label{fig: Image Reconstruction occluded_3b}]{
\begin{minipage}{0.1\linewidth}
\includegraphics[width=1\linewidth]{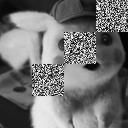}\\
\includegraphics[width=1\linewidth]{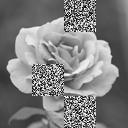}
\end{minipage}}
\subfloat[\label{fig: Image Reconstruction l1_3b}]{
\begin{minipage}{0.1\linewidth}
\includegraphics[width=1\linewidth]{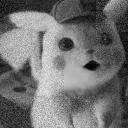}\\
\includegraphics[width=1\linewidth]{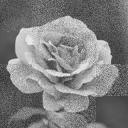}
\end{minipage}}
\subfloat[\label{fig: Image Reconstruction l2_3b}]{
\begin{minipage}{0.1\linewidth}
\includegraphics[width=1\linewidth]{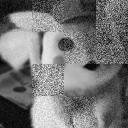}\\
\includegraphics[width=1\linewidth]{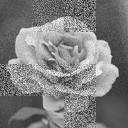}
\end{minipage}}
\subfloat[\label{fig: Image Reconstruction min_3b}]{
\begin{minipage}{0.1\linewidth}
\includegraphics[width=1\linewidth]{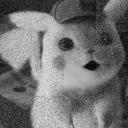}\\
\includegraphics[width=1\linewidth]{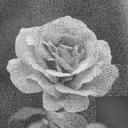}
\end{minipage}}
\subfloat[\label{fig: Image Reconstruction min2_3b}]{
\begin{minipage}{0.1\linewidth}
\includegraphics[width=1\linewidth]{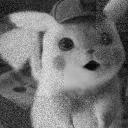}\\
\includegraphics[width=1\linewidth]{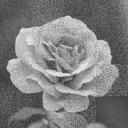}
\end{minipage}}\vspace{-5pt}
\caption{Image reconstruction examples: (a) original images; (b) occluded images; (c) recursive $
\ell_1$-PCA \cite{markopoulos2014optimal} results; (d) PCA using RPI results; (e) min1-PI results; (f) min2-PI results. MAPI is as robust as $\ell_1$-PCA based restoration.}
\vspace{-10pt}
\label{fig: Image Reconstruction}
\end{figure*}

\begin{table}[htbp]
	\centering
	\caption{Image Reconstruction PSNRs (dB). }
	\begin{tabular}{p{1.2cm}p{0.9cm}p{1.3cm}p{0.9cm}p{0.9cm}p{0.9cm}}
		\hline\noalign{\smallskip}
		\textbf{Image}&\textbf{Occlu- ded}&\textbf{Recursive $\ell_1$-PCA}&\textbf{RPI}&\textbf{Min1-PI}&\textbf{Min2-PI}\\
		\noalign{\smallskip}\hline\noalign{\smallskip}
        Statue 1&17.67&26.86&26.67&\textbf{27.12}&27.08\\
        Statue 2&16.87&25.02&24.61&\textbf{25.45}&25.34\\
        Earth&14.91&21.74&22.13&20.78&\textbf{25.03}\\
        Pikachu&15.29&\textbf{22.71}&18.86&22.70&22.58\\
        Flower&16.40&24.41&21.22&\textbf{24.59}&24.44\\
        Orange&15.77&23.69&23.59&22.50&\textbf{26.27}\\
        Cat&16.91&\textbf{24.82}&24.80&24.28&24.52\\
        Lenna&16.77&24.71&22.86&\textbf{25.59}&24.73\\
        Food&15.94&23.85&22.72&23.68&\textbf{23.70}\\
        Car&15.42&23.37&23.15&\textbf{23.43}&23.41\\
        Cobra&16.81&25.10&22.31&25.02&\textbf{25.57}\\
        River&17.27&25.23&24.50&25.07&\textbf{25.88}\\
        Butterfly&16.66&{\bf 24.91}&24.40&23.93&23.92\\
        Bridge&15.66&22.91&22.23&22.55&\textbf{24.34}\\
		\noalign{\smallskip}\hline\noalign{\smallskip}
        Average&16.31&24.24&23.15&24.05&\textbf{24.77}\\
		\noalign{\smallskip}\hline\noalign{\smallskip}
	\end{tabular}
	\label{tab: Image Restoration Results for Images corrupted by Noise Patches}\vspace{-20pt}
\end{table}


\subsection{Comparison with Stochastic PCA Method}
We now compare the MAPI with the stochastic power iteration method on a synthetic dataset as in \cite{xu2018accelerated}. The synthetic dataset $\smash{\mathbf{X}\in\mathbb{R}^{10^6\times10}}$ is generated using singular value decomposition. In detail, for a diagonal matrix $\mathbf{\Sigma}=\text{diag}\{1, \sqrt{0.9}, ..., \sqrt{0.9}\}\in\mathbb{R}^{10\times10}$, and random orthogonal  matrices $\mathbf{U}\in\mathbb{R}^{10^6\times10},\,\mathbf{V}\in\mathbb{R}^{10\times10}$, it is guaranteed that the matrix $\mathbf{C}=\mathbf{X}^T\mathbf{X}$ has an eigen-gap of $0.1$, where $\mathbf{X}=\mathbf{U\Sigma V}^T$. 

MAPI method can be also implemented into the mini-batch power method with momentum algorithm \cite[Algorithm 1]{xu2018accelerated}. We apply MAPI on 
$\mathbf{A}=\mathbf{X}^T\odot\mathbf{X}$ with the same mini-batch strategy, as Algorithm~\ref{al: Mini-batch Min Power Iteration with Momentum} shows. The difference between Algorithm~\ref{al: Mini-batch Min Power Iteration with Momentum} in this paper and Algorithm 1 in \cite{xu2018accelerated} is that we change all regular multiplication operations with our min1 operation (\ref{odotproduct}) and normalize the iterates using the $\ell_1$-norm. At the end of iterations we normalize the final vector using the  $\ell_2$-norm of the vector. 
In this experiment, we cannot use min2-PI operation because the min2 operation (\ref{odotproduct2}) cannot return negative entries. The vector $\mathbf{u}_1$ can have negative entries.

\begin{figure}[h]
    \centering
    \subfloat[\label{fig: stocastic_a}]{\includegraphics[width=0.5\linewidth]{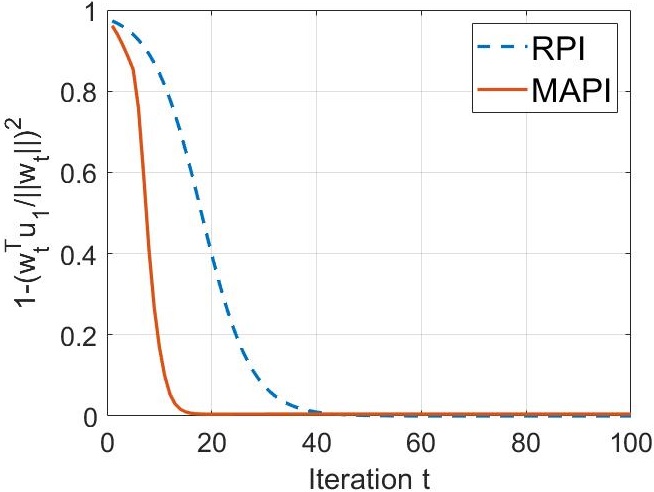}}
    \subfloat[\label{fig: stocastic_b}]{\includegraphics[width=0.5\linewidth]{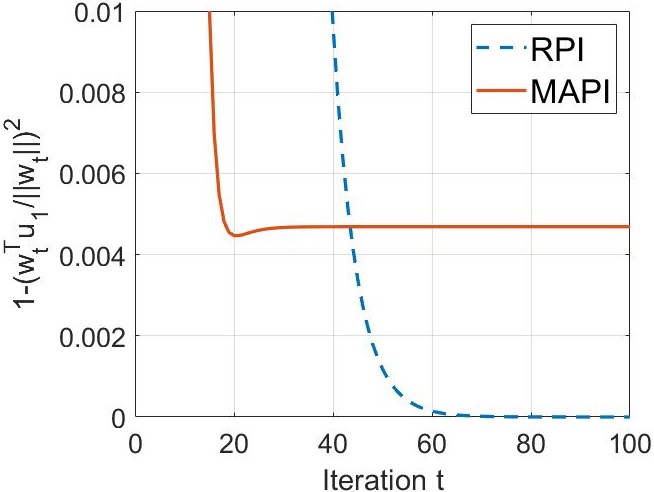}}
    \vspace{-10pt}\caption{RPI versus MAPI on a synthetic dataset $\mathbf{X}\in\mathbb{R}^{10^6\times10}$ where the covariance matrix has eigen-gap $\Delta=0.1$.}
    \label{fig: stocastic}\vspace{-10pt}
\end{figure}

In Fig.~\ref{fig: stocastic}, we compute $(1-(\mathbf{w}_t^T\mathbf{u}_1/||\mathbf{w}_t||)^2)$ at each iteration, where $\mathbf{u}_1$ is the dominant eigenvector obtained from eigendecomposition as in~\cite{xu2018accelerated}. 
The MAPI iteration converges to a vector very close to the actual eigenvector because 
 $(1-(\mathbf{w}_t^T\mathbf{u}_1/||\mathbf{w}_t||)^2)$ is very close to zero as shown in Fig.~\ref{fig: stocastic_a} for $t$ greater than 20. Such preservation of the eigenvector information can be considered as a direct analogue of Bussgang's theorem \cite{giunta1997bussgang, giunta1998bussgang}. However, $\mathbf{w}_T$ obtained using the MAPI method is not exactly the same as the eigenvector $\mathbf{v}_1$ as we see from Fig.~\ref{fig: stocastic_b} whose vertical axis has a different range from Fig.~\ref{fig: stocastic_a}. This is expected because we perform the iterations in the RKHS domain. Nevertheless, based on our observation, ranks of the entries of $\mathbf{w}_T$ obtained from the min1-PI
are the same as the ranks obtained from $\mathbf{w}_T$ of the RPI. After 100 iterations, $\mathbf{w}_T=[$-0.143, 0.417, -0.117, 0.386, 0.129, -0.132, -0.433, -0.210, -0.577, -0.211] from the RPI and $\mathbf{w}_T=[$-0.165, 0.417, -0.137, 0.389, 0.148, -0.151, -0.431, -0.236, -0.536, -0.237] from the min1-PI. When we order from the largest to the smallest, both vectors produce the same ranks \{2, 4, 5, 3, 6, 1, 8, 10, 7, 9\}.
Hence, min1-PI converges significantly faster than RPI and both algorithms produce the same final ranks. In Section~\ref{secPageRank}, we will show how these properties allow us to employ MAPI in graph-based ranking algorithms such as Google PageRank \cite{page1999pagerank}. Moreover, if we need $(1-(\mathbf{w}_t^T\mathbf{u}_1/||\mathbf{w}_t||)^2)$ to reach ``absolute'' $0$, we can optimize the vector via the MAPI first and then switch to the RPI for further convergence.

\begin{algorithm}[H]
	\caption{Mini-batch MAPI with Momentum}
	\label{al: Mini-batch Min Power Iteration with Momentum}
	\begin{algorithmic}[1]
		\renewcommand{\algorithmicrequire}{\textbf{Input:}}
		\renewcommand{\algorithmicensure}{\textbf{Output:}}
		
		\REQUIRE Data $\mathbf{X}\in\mathbb{R}^{N\times D}$, iteration times $T$, batch size $s$, momentum parameter $\beta$.
		
		\STATE Initialize $\mathbf{w_0}$ as a random vector with $||\mathbf{w_0}||=1$;
		\FOR{$t=0, 1, \ldots, T-1$} 
		\STATE Generate a mini batch of independent and identically distributed samples $B=\{\tilde{\mathbf{A}}_{t_1}, ..., \tilde{\mathbf{A}}_{t_s}\}$
		\STATE $\mathbf{w}_{t+1} = (\frac{1}{s}\sum_{i=1}^s\tilde{\mathbf{A}}_{t_i})\odot \mathbf{w}_t-\beta \mathbf{w}_{t-1}$;
		\STATE $\mathbf{w}_{t} = \mathbf{w}_{t}/||\mathbf{w}_{t+1}||_1, \mathbf{w}_{t+1} = \mathbf{w}_{t+1}/||\mathbf{w}_{t+1}||_1$;
		\ENDFOR 
		\STATE (optional) $\mathbf{w}_T = \mathbf{w}_T/||\mathbf{w}_T||$;
		\RETURN  dominant pseudo-eigenvector $\mathbf{w}_T$.\\
	\end{algorithmic}
\end{algorithm}

\subsection{PageRank Algorithm Using MAPI}\label{secPageRank}
PageRank algorithm uses the hyperlink structure of the web to view inlinks into a page as a recommendation of that page from the author of the inlinking page \cite{benincasa2006page}. Specifically, it relies on applying power iteration to the matrix
\begin{equation}
    \mathbf{G} = \alpha \mathbf{H} + \frac{1-\alpha}{N}\mathbf{1}
\end{equation}
where $\mathbf{H}$ is the network adjacent matrix, $\mathbf{1}$ is an all-ones matrix, and $\alpha=0.85$ is known as a damping factor.

We first execute the PageRank algorithm with RPI or MAPI on the network graph in Fig.~\ref{fig: Network Graph}. We compute $||\mathbf{w}_t-\mathbf{w}_{t-1}||$ after each iteration to compare the convergence. 

\begin{figure}[htbp]\vspace{-10pt}
    \centering
    \subfloat[Network Graph\label{fig: Network Graph}]{\includegraphics[width=0.5\linewidth]{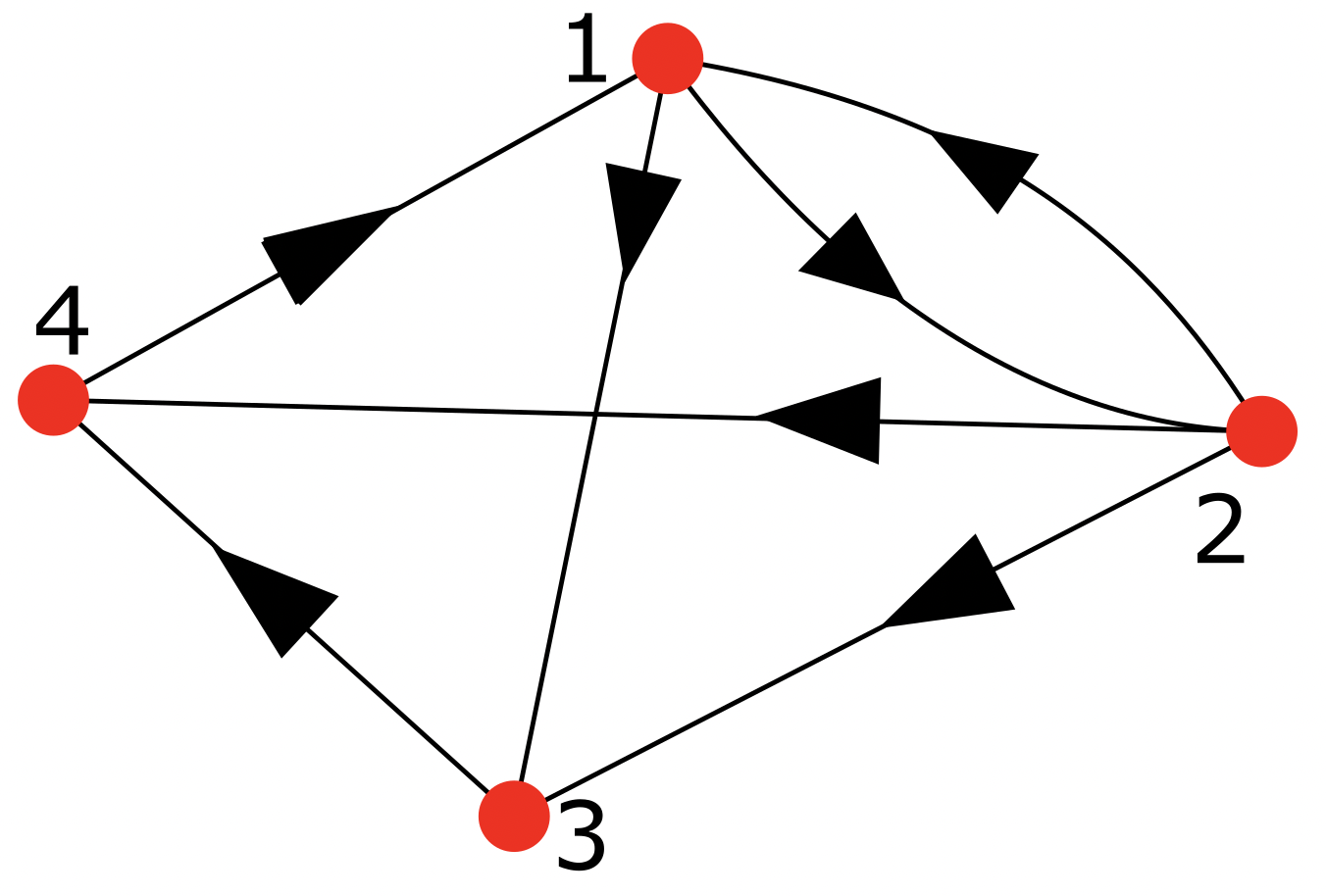}}
    \subfloat[Convergence Curve\label{fig: Convergence Curve }]{\includegraphics[width=0.5\linewidth]{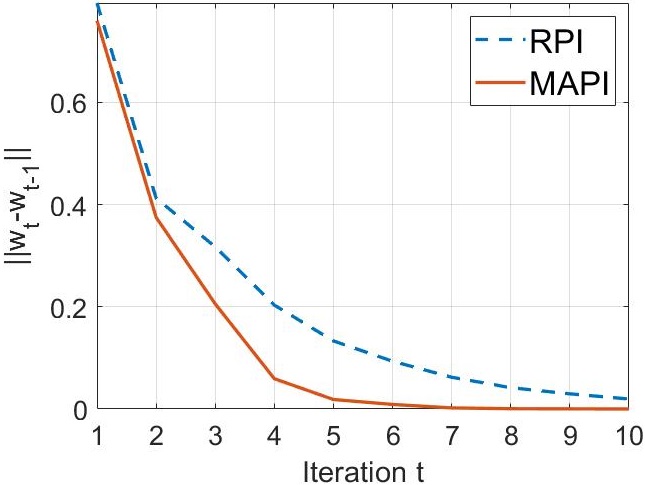}}
    \vspace{-5pt}\caption{A PageRank example.}
    \label{fig: PageRank}\vspace{-5pt}
\end{figure}

We apply one $\ell_2$-normalization after the final iteration for easier comparison. After 20 iterations,  RPI yields $w_T=[$0.634, 0.345, 0.434, 0.535], and MAPI provides $w_T=[$0.568, 0.349, 0.520, 0.535]. Consequently, although the weights are different, both power iteration methods provide the same index ranks \{1, 4, 3, 2\}.



We further use the PageRank algorithm with both RPI and MAPI on Gnutella08 \cite{p2p-Gnutella08} and Gnutella09 \cite{p2p-Gnutella09} peer-to-peer network datasets. Gnutella08 contains 6,301 nodes with 20,777 edges, and Gnutella09 contains 8,114 nodes with 26,013 edges. The convergence curves are shown in Fig.~\ref{fig: Gnutella peer-to-peer network datasets}. MAPI still converges remarkably faster than RPI for both networks. However, because the size of the networks are very large, the ranks provided by the two methods are not the same. After 10 iterations, from Gnutella08, the top-10 ranks of the indices from the RPI are \{367, 249, 145, 264, 266, 123, 127, 122, 1317, 5\}, while the MAPI yields \{266, 123, 367, 127, 424, 249, 145, 264, 427, 251\}. On the other hand, from Gnutella09, the top-10 ranks of the indices from the RPI are \{351, 563, 822, 534, 565, 825, 1389, 1126, 356, 530\}, while the MAPI returns \{351, 822, 51, 1389, 563, 565, 530, 825, 356, 1074\}. Therefore, there are 7 common top-10 ranks of indices \{367, 249, 145, 265, 226, 123, 127\} from the 6,301-node dataset Gnutella08 and 8 common top-10 ranks of indices \{351, 563, 822, 565, 825, 1389, 356, 530\} from the 8,114-node dataset Gnutella09. Therefore, if we implement the MAPI in the page-rank-based web search system, the links displayed on the first page is very close to the conventional page-rank based system. The search time can also be reduced significantly. 
\vspace{-5pt}
\begin{figure}[htbp]
    \subfloat[Gnutella08\label{fig: CA_AstroPh}]{\includegraphics[width=0.5\linewidth]{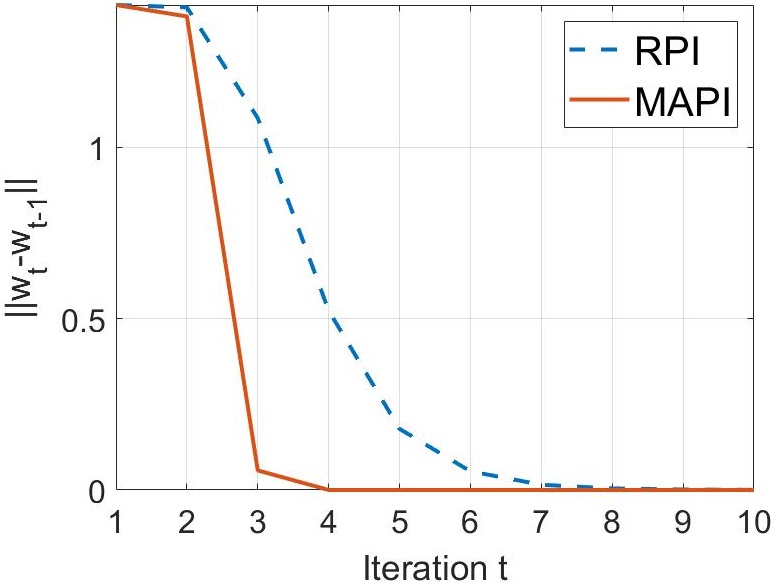}}
   \subfloat[Gnutella09\label{fig: CA_AstroPh}]{\includegraphics[width=0.5\linewidth]{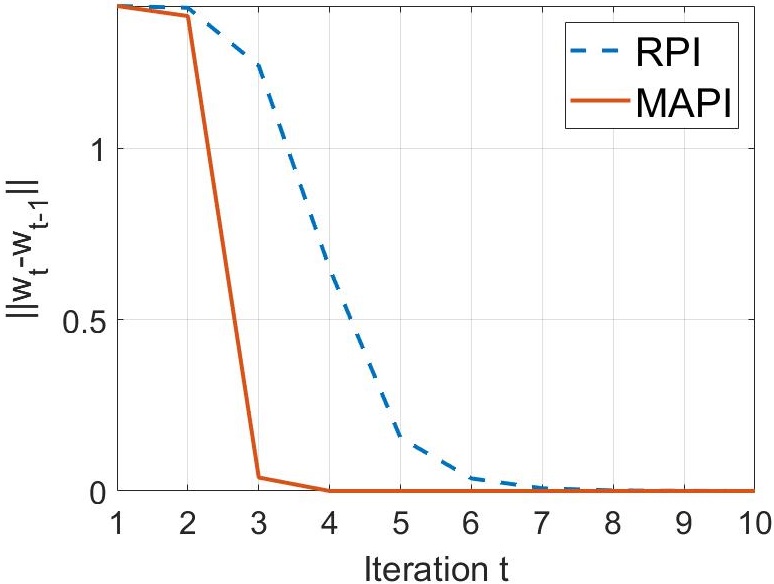}}\vspace{-5pt}
   \caption{PageRank convergence curves on Gnutella.}
   \label{fig: Gnutella peer-to-peer network datasets}
\end{figure}
\vspace{-20pt}
\section{Conclusion}
\label{secConclusions}
We proposed two types of multiplication-avoiding power iteration (MAPI) algorithms which replace the standard vector product operations in regular power iteration (RPI) with multiplication-free vector products. The MAPI is energy efficient because it significantly reduces the number of multiplication operations, which are known to be costly in terms of energy consumption in many processors. MAPI is as robust as $\ell_1$ principal component analysis as its kernels are related with the $\ell_1$-norm. According to our graph-based ranking experiment, the MAPI is also superior to the RPI in terms of convergence speed and energy efficiency. Though the final ranking obtained from the MAPI is not always exactly identical to the ranking obtained from the RPI but they are very close to each other. 


\bibliographystyle{IEEEtran}
\bibliography{main}

\end{document}